\documentclass[10pt,conference]{IEEEtran}
%

\usepackage{subfig}
\usepackage{bm}
\usepackage{mathrsfs}
\usepackage{textcomp}
\usepackage{epsfig}
\usepackage{graphicx}
\usepackage{epstopdf}
\usepackage{indentfirst}
\usepackage{amsmath}
\usepackage{amssymb}
\usepackage{array}
\usepackage{multirow}
\usepackage{graphicx}
\usepackage{color}

\usepackage{amsthm}
\newtheorem{theorem}{Theorem}
\newtheorem{lemma}[theorem]{Lemma}

\usepackage{times}
\usepackage[ruled,vlined]{algorithm2e}
\theoremstyle{definition}

\theoremstyle{remark}

\usepackage{algorithmic}
\usepackage{cite}

\setlength{\columnsep}{0.201 in}
\def\BibTeX{{\rm B\kern-.05em{\sc i\kern-.025em b}\kern-.08em T\kern-.1667em\lower.7ex\hbox{E}\kern-.125emX}}
\hyphenation{op-tical net-works semi-conduc-tor}

\usepackage{graphicx}

\begin{document}
\title{Optimizing NOMA Transmissions to Advance Federated Learning in Vehicular Networks}


\author{
    \IEEEauthorblockN{Ziru Chen\IEEEauthorrefmark{1}, Zhou Ni\IEEEauthorrefmark{2}, Peiyuan Guan\IEEEauthorrefmark{3}, Lu Wang\IEEEauthorrefmark{4}, Lin X. Cai\IEEEauthorrefmark{1}, Morteza Hashemi\IEEEauthorrefmark{2}, Zongzhi Li\IEEEauthorrefmark{4}}
\IEEEauthorblockA{\IEEEauthorrefmark{1}Department of Electrical and Computer Engineering, Illinois Institute of Technology, Chicago, USA,\\ \IEEEauthorrefmark{2}Department of Electrical Engineering and Computer Science, University of Kansas, Lawrence, USA, \\
\IEEEauthorrefmark{3} Department of Informatics, University of Oslo, Oslo, Norway,\\
\IEEEauthorrefmark{4} Department of Civil Engineering, Illinois Institute of Technology, Chicago, USA}

\IEEEauthorblockA{zchen71@hawk.iit.edu, zhou.ni@ku.edu, peiyuang@ifi.uio.no, lwang121@hawk.iit.edu, }

\IEEEauthorblockA{lincai@iit.edu, mhashemi@ku.edu, lizz@iit.edu}
}

\maketitle

\begin{abstract}
Diverse critical data, such as location information 
and driving patterns, can be collected by IoT devices in vehicular networks to improve driving experiences and road safety. 
However, drivers are often reluctant to share their data due to privacy concerns. The Federated Vehicular Network (FVN) is a promising technology that tackles these concerns by transmitting model parameters instead of raw data, thereby protecting the privacy of drivers. 
Nevertheless, the performance of Federated Learning (FL) in a  vehicular network depends on the joining ratio, which is restricted by the limited available wireless resources. 
To address these challenges, this paper proposes to apply Non-Orthogonal Multiple Access (NOMA) to improve the joining ratio in a FVN. Specifically, a vehicle selection and transmission power control algorithm is developed to exploit the power domain differences in the received signal to ensure the maximum number of vehicles capable of joining the FVN. 
Our simulation results demonstrate that the proposed NOMA-based strategy increases the joining ratio and significantly enhances the performance of the FVN. 






\end{abstract}

\begin{IEEEkeywords}
Federated Vehicular Network, NOMA
\end{IEEEkeywords}

\section{Introduction}


Advancements in Internet of Things (IoT) technologies have made vehicles more complex and intelligent over the past decades. As a result, vehicles generate a wide variety of data from devices such as engines and radars. The collected data from vehicles serves as a foundation for developing innovative technologies that enhance the driving experience and improve road safety~\cite{ye2024multiplexed}. 
However, privacy concerns, as pointed out in~\cite{lu2018survey}, pose a significant challenge to data exchange and collection as drivers may be reluctant to share sensitive information such as location or driving behavior with others.


Federated learning (FL) is a promising technology to address privacy concerns. 
It involves model training across decentralized devices such as vehicles, which hold local data samples without exchanging them. By integrating FL into vehicular networks, federated vehicular networks (FVN) have emerged to enable vehicles to collaboratively learn and adapt to new environments while ensuring privacy~\cite{liu2024fedagl}. However, implementing FL in vehicular networks poses new challenges~\cite{lai2020security}, including selecting appropriate FL clients, ensuring robust communication between vehicles, and managing the learning process efficiently across the vehicular network.


Existing works \cite{nguyen2021federated, chen2021communication} on federated learning mainly study the impacts of learning parameters such as learning rate, communication rounds, local epochs, and so on, on the convergence speed, communication efficiency, and performance of the federated learning model. One under-explored essential metric in FL is the joining ratio, which represents the proportion of clients participating in each round of training. Generally, A higher joining ratio leads to a better performance of federated learning, as more clients contribute to the model parameters, allowing the global model to converge faster and improve its accuracy~\cite{liu2022joint,nguyen2020efficient,xu2021learning}. When the joining ratio reaches $1$, it indicates that all clients are actively involved in the training process for that round, maximizing the collective learning potential of the system. 
However, the mobility of participant vehicles and dynamic network typologies make it very challenging to coordinate FL tasks and maintain a stable set of participating vehicles over time. Thus, it is crucial to improve the joining ratio in a FVN to enhance the efficiency of distributed learning algorithms.




Non-Orthogonal Multiple Access (NOMA) is a cutting-edge multiple access technique for 5G and beyond. Unlike the traditional multiple access technologies where users are allocated orthogonal network resources for channel access, e.g., orthogonal time slots or orthogonal subcarriers, NOMA allow users to transmit concurrently using the same time-frequency resource. Thus, NOMA promotes spectrum efficiency and multiple access by allowing users to share the same resource non-orthogonally. The superimposed signals with different power levels are then sequentially decoded at the receiver side based on successive interference cancellation (SIC) technique~\cite{chen2021performance, ding2017survey, chen2021performance1}. 

In this work, we propose to apply advanced NOMA communications to augment the joining ratio and fostering collaborative and efficient federated learning in vehicular networks. To this end, we first propose an FVN architecture that utilizes NOMA for vehicle communications. Leveraging dynamic vehicle trajectories, the transmit power control and vehicle selections are jointly optimized to maximize the participation ratio of vehicles in federated learning. To the best of our knowledge, this is the first work that applies advanced NOMA communications to improve the performance of a FVN. The contributions of this paper are listed as follows.

\begin{enumerate}
\item We formulate a NOMA-enabled FVN framework which incorporates advanced NOMA transmissions to promote vehicles' participation in distributed learning.

\item We propose a joint vehicle selection and transmission power control algorithm to maximize the number of vehicles that can participate in the model training, while mitigating the mutual interference among concurrent transmissions of multiple vehicles.  


\item 
The proposed  NOMA-enabled FVN framework greatly promotes the joining ratio of FL, leading to accelerated convergence speed and enhanced learning stability for both i.i.d. and non-i.i.d.  data sets.  


\end{enumerate}



The remainder of this paper is organized as follows. The system model is presented in Section II. A joint vehicle selection and power allocation scheme is proposed in Section III, followed by the FL algorithm in Section IV.  Performance evaluation is presented in Section V. Finally, concluding remarks are provided in Section VI.

\label{sec:label}

\section{System Model}  \label{sec:sys}
\subsection{NOMA-enabled FVN framework}

We consider a wireless FVN with a set of $N$ vehicles, i.e., $ \{1, 2, ..., N\}$ and a single BS within a given area as shown in Fig.~\ref{fig::sys}, where BS is the central server, and vehicles are clients participate in a FL task, such as street view recognition, traffic control, etc. Denote the dataset of the $n$-th vehicle as $D_{n} = { (\textbf{x}_{i}^{(n)}, y_{i}^{(n)}) }_{i = 1}^{k_n}$, drawn from a distribution $\mathcal{D}_{n}$ over the space $\mathcal{X} \times \mathcal{Y}$. The total dataset size across all vehicles is denoted by $K = \sum_{n = 1}^{N} k_n$, where $k_n$ is the number of samples of vehicle $n$. Thus, the objective of FL can be written as:
\begin{equation}
\min_{\omega^{*}} \frac{1}{N} \sum_{n=1}^{N} \mathbb{E}_{\mathcal{D}_{n} \sim \textbf{x}_{i}^{(n)}, y_{i}^{(n)}} [f_{n} ( \omega_{n}; D_{n} )],
\end{equation}
where $\textit{f}_{n}: \mathcal{X}\times \mathcal{Y} \to \mathbb{R}_{+}$ denotes the loss function for the $n$-th vehicle, and $\omega$ is the model parameters during the training. The objective of the training process is to obtain the optimal global model parameters to achieve the best learning performance. To achieve this, the BS communicates with multiple vehicles to exchange model parameters. To enable efficient data exchange, NOMA is adopted for uplink transmissions of vehicles. Due to high interference among transmitting clients over wireless channels, it is optimal to select a subset of vehicles for uplink NOMA transmissions~\cite{jee2022performance}. As such, in the downlink, the BS broadcasts a global model to vehicles within its coverage area, the selected subset of vehicles $U_t$ to participate in the NOMA-based FL training, and the corresponding transmission powers of the participating vehicles. After several epochs of local training, the selected vehicles upload their updated local models, i.e., $\omega^{*}$ to the BS in the uplink NOMA transmissions.  


\subsection{Data Communication Model}
\begin{figure}[t]    \includegraphics[width=0.37\textwidth]{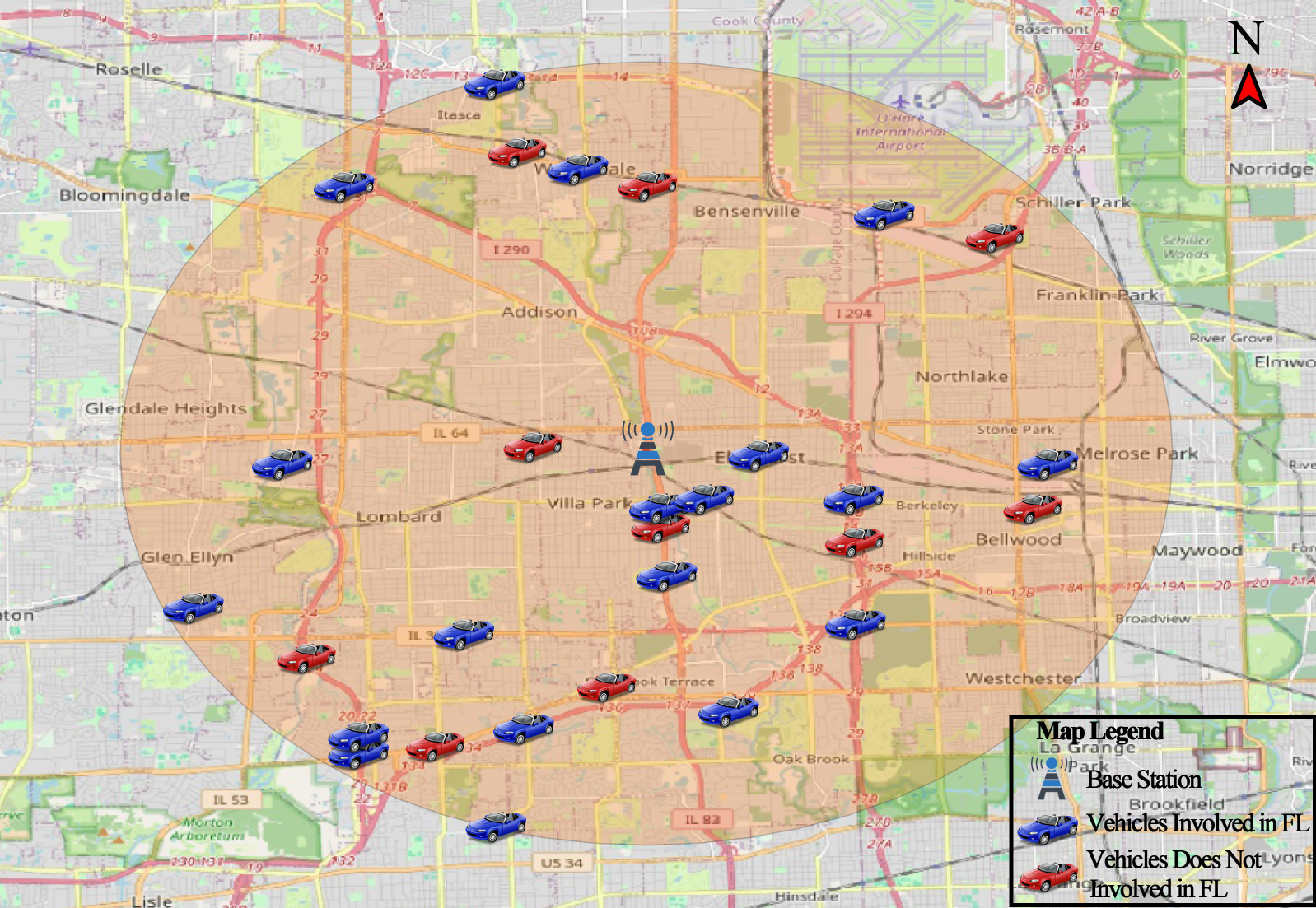}
    \centering
    \caption{System Model Illustration.}
    \label{fig::sys}
\end{figure}

Throughout the uplink FL aggregation phase, vehicles are actively trying to transmit their local model information to the BS utilizing the wireless communication channel. Consequently, the signal received at the BS for any given time slot, denoted by $I_t$ is thus given by: 
\begin{align}
   I_t=\sum_{n=1}^{M} h_{n,t}\sqrt{p^t_{n,t}}x_{n,t}+\sigma_t, 
\end{align}
 in which $M\leq N$ is the number of vehicles that transmit the signals at the same resource block, $h_{n,t}$ is the channel state information (CSI)  between the $n$-th vehicle to the BS in time slot $t$. $p^t_{n,t}$ and $x_{n,t}$ are the transmit power and transmission  signal of the vehicle $n$ in slot $t$ respectively, and $\sigma_t$ is the white Gaussian noise. 

NOMA is employed during uplink transmissions, which allows multiple vehicles to simultaneously transmit their updates using the same resource block, thereby improving the efficiency of resource usage and ensuring effective data communication among multiple vehicles. Denote the received signal strength of vehicle $n$ as
\begin{equation} \label{eq:qkt1}
p^r_{n,t} = p^t_{n,t}h_{n,t}^{2}, \ \forall n. 
\end{equation}
Note that the transmit power is upper bounded, where $p^t_{n,t}\leq p^t_{n,max}$.
The BS applies SIC to decode the received signals sequentially, prioritizing them in descending order of their signal strengths., i.e.,
$p^r_{1,t}\geq p^r_{2,t},\geq\dots\geq p^r_{M,t}$. 
The SINR of the $n$-th signal is thus given by~\cite{chen2022deep}: 
\begin{equation}\label{eq:SINR_before}
 \mathbf {SINR}_{n,t}=\frac{p^r_{n,t}}{\sum_{j=1}^{n-1} a_{j,t} p^r_{j,t}+\sum_{i=n+1}^{M} p^r_{i,t}+\sigma_t^{2}},   
\end{equation}
where $a_{j,t} = 0$ indicates that the signal $j$ has been successfully decoded and canceled under the condition that its SINR is larger than a threshold $\Gamma$, i.e.,  $\mathbf {SINR_{j,t}} \geq \Gamma$; and $a_{j,t} = 1$ otherwise. It is worth noting that for NOMA transmissions, the signal from vehicle $m$ can only be decoded after signals with greater received strengths have been successfully decoded and canceled., i.e., $a_{j,t}=0, \forall j=\{1,2,...m-1\}$~\cite{chen2022deep}. 
Therefore, provided that the preceding signals have been successfully decoded, \eqref{eq:SINR_before} can be rewritten as
\begin{align}
\mathbf {SINR}_{n,t}=\frac{p^r_{n,t}}{\sum_{i=n+1}^{M} p^r_{i,t}+\sigma_t^{2}}.
\end{align}

The access indicator of the $n$-th vehicle at the time slot $t$ under the SINR constraint can be written as: 
\begin{equation}
   I_{n,t}=
   \begin{cases}
   1, & \text{$\mathbf {SINR}_{n,t}\geq \Gamma$},\\
   0, & \text{$\mathbf {SINR}_{n,t}< \Gamma$.}
   \end{cases}
\end{equation}
The number of vehicles access the network at time slot $t$ is given by
$
M_t = \sum_{n = 1}^N I_{n,t}.
$



To enhance the join ratio of the FL system, we aim to maximize $M_t$, which represents the number of vehicles that can successfully upload their local model information to the BS where a global model is implemented at time slot $t$.

\label{sec:system model}

\section{Vehicle Selection and Transmission Power Allocation}  

In this paper, the strategy to maximize $M_t$ is outlined with an emphasis on the proper selection of vehicles and the careful design of the transmission power for each selected vehicle. This approach ensures that the $\mathbf{SINR}$ requirements are satisfied, facilitating the successful transmission of local model information to the BS.

Denote the maximum transmit power of the $n$-th vehicle as $p^t_{n,max}$, and the channel state information (CSI) between the $n$-th vehicle and the base station (BS) as 
$h_{n,t}$. Based on the CSI, the central controller determines  the transmit power of each vehicle, $p^t_{n,t}$, thereby ensuring that the received power $p^r_{n,t}$ remains within a certain bound.

\begin{lemma}\label{lemma:decode_constraint}
Provided that the signal from the $(n+1)$-th vehicle is decodable, the signal from the $n$-th vehicle will be decodable if the following conditions are met: $\frac{p^r_{n,t} }{p^r_{n+1,t}}\geq (1+\Gamma), \forall n<M_t-1$; and $p^r_{M_t-1,t}\geq (p^r_{M_t,t}+\sigma_t^2)\Gamma$.
\begin{proof}     

If the signal of the $(n+1)$-th vehicle can be decoded, i.e., $\mathbf{SINR}_{n+1,t}\geq \Gamma, \forall k<M_t-1$, the interference and noise of  the $(n+1)$-th signal, denoted as $\mathcal{I}_{n+1,t}$, is upper bounded by: 
\begin{align}\label{eq:Nk+1}
    \mathcal{I}_{n+1,t}= \sum_{i=n+2}^{M_t}p^r_{i,t} +\sigma_t^2 \leq \frac{p^r_{n+1,t}}{\Gamma}, \forall n < M_t.
\end{align}
If the condition $\frac{p^r_{n,t} }{p^r_{n+1,t}}\geq (1+\Gamma)$ satisfies, the interference of the signal for the $n$-th vehicle,   $\mathcal{I}_{n,t}$, is upper bounded by: 
\begin{align}\label{eq:d1}
     \mathcal{I}_{n,t} &= \sum_{i=n+1}^{M_t} p^r_{i,t}+\sigma_t^2
     \leq p^r_{n+1,t}+\frac{p^r_{n+1,t}}{\Gamma}\nonumber\\&
     =p^r_{n+1,t}\left(1+\frac{1}{\Gamma}\right)\leq \frac{p^r_{n,t}}{\Gamma}, \forall n< M_t-1.
\end{align}
which ensures $\mathbf{SINR}_{n,t}\geq \Gamma, \forall n<M_t-1$.

When it comes to the last two signals, to ensure $\mathbf{SINR}_{M_t-1,t} \geq \Gamma$, we have $p^r_{M_t-1,t}\geq (p^r_{M_t,t}+\sigma_t^2)\Gamma$. This completes the proof. 

\end{proof}
\end{lemma}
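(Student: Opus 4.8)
The plan is to prove the claim by backward induction on the decoding index $n$, moving from the weakest received signals up through the SIC order. The central observation is that the interference-plus-noise terms obey the simple recursion $\mathcal{I}_{n,t} = p^r_{n+1,t} + \mathcal{I}_{n+1,t}$, since the only difference between the denominators of $\mathbf{SINR}_{n,t}$ and $\mathbf{SINR}_{n+1,t}$ is the single extra term $p^r_{n+1,t}$. Moreover, decodability of the $n$-th signal, $\mathbf{SINR}_{n,t} \geq \Gamma$, is equivalent to the bound $\mathcal{I}_{n,t} \leq p^r_{n,t}/\Gamma$. So the entire argument reduces to propagating this bound upward through the recursion.

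First I would establish the base case, namely the last two signals, which anchors the induction. Since the $M_t$-th signal sees no residual interference after all its predecessors are cancelled, its decodability is immediate from $\mathbf{SINR}_{M_t,t} = p^r_{M_t,t}/\sigma_t^2$. The $(M_t-1)$-th signal, however, still contends with the undecoded $p^r_{M_t,t}$, so imposing $\mathbf{SINR}_{M_t-1,t} \geq \Gamma$ on $p^r_{M_t-1,t}/(p^r_{M_t,t}+\sigma_t^2)$ yields precisely the stated terminal condition $p^r_{M_t-1,t} \geq (p^r_{M_t,t}+\sigma_t^2)\Gamma$.

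For the inductive step, I would assume the $(n+1)$-th signal is decodable, i.e.\ $\mathbf{SINR}_{n+1,t} = p^r_{n+1,t}/\mathcal{I}_{n+1,t} \geq \Gamma$, which rearranges to $\mathcal{I}_{n+1,t} \leq p^r_{n+1,t}/\Gamma$. Substituting into the recursion gives $\mathcal{I}_{n,t} \leq p^r_{n+1,t}(1+1/\Gamma) = p^r_{n+1,t}(\Gamma+1)/\Gamma$. The power-ratio hypothesis $p^r_{n,t}/p^r_{n+1,t} \geq 1+\Gamma$ then supplies $p^r_{n+1,t} \leq p^r_{n,t}/(1+\Gamma)$, and the two factors cancel to leave $\mathcal{I}_{n,t} \leq p^r_{n,t}/\Gamma$, i.e.\ $\mathbf{SINR}_{n,t} \geq \Gamma$. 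Thus decodability propagates from index $n+1$ to index $n$, and together with the base case this covers all indices.

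The only genuine subtlety, and the step I would watch most carefully, is the calibration of the constant in the power-ratio condition: the factor $(1+1/\Gamma)$ accumulated from bounding the interference must be exactly annihilated by the factor $(1+\Gamma)$ in the ratio hypothesis, which is why the threshold is $1+\Gamma$ rather than something looser. I would also verify the index bookkeeping at the boundary, since the inductive step is stated for $n < M_t-1$ while the terminal condition handles $n = M_t-1$; the two ranges must meet without a gap or overlap so that every signal in the SIC chain is accounted for.
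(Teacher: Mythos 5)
Your proposal is correct and follows essentially the same argument as the paper: you bound the interference of the $n$-th signal via $\mathcal{I}_{n,t} = p^r_{n+1,t} + \mathcal{I}_{n+1,t} \leq p^r_{n+1,t}(1+\tfrac{1}{\Gamma}) \leq \tfrac{p^r_{n,t}}{\Gamma}$ using decodability of the $(n+1)$-th signal and the ratio condition, and handle the $(M_t-1)$-th signal via the terminal condition, exactly as the paper does. Your explicit backward-induction framing is merely a cleaner packaging of the paper's inductive step, not a different route.
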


\begin{algorithm}
	\renewcommand{\algorithmicrequire}{\textbf{Input:}}
	\renewcommand{\algorithmicensure}{\textbf{Output:}}
	\caption{Vehicle Selection and Power Control algorithm}
	\label{alg:alg1}

 \textbf{Input:}  $ p^t_{n,t}, h_{n,t}, N, \sigma_t^2,\Gamma$. \\ 
    Initialize sets $p^t_{n,t}=p^r_{n,t}=\epsilon=0$, $i=j=l=1$ and $U_t=\Phi = \Lambda = \emptyset$. 
	\begin{algorithmic} [1]

      \STATE Decide the maximum achievable received signal strength of each vehicle based on their available energy, $p^{r}_{n,max}=p^t_{n,max}h_{n,t} , \forall n.$

		\STATE  Sort $p^{r}_{n,max}$ in the descending order $p^{r}_{1,max}>...>p^{r}_{M_t,max}$.

\IF{$p^{r}_{1,max}\geq \sigma_t^2 \Gamma$}
\STATE 
 Set $p^{r}_{1,t}=p^{r}_{n,max}$ and add vehicle $1$ into $U_t$;
\ENDIF
 \FOR{$n=2:N$}

	\IF{$\frac{p^r_{n-1,t}}{\Gamma+1} > {\sigma_t^2\Gamma}$}

\STATE Add vehicle $n$ into $U_t$;

    \IF{$\frac{p^r_{n-1,t}}{\Gamma+1}< p^r_{n,max}$}
    \STATE Set $p^r_{n,t}=\frac{p^r_{n-1,t}}{\Gamma+1}$;
    \STATE Calculate $\phi_{n,t}=p^r_{n,max}-\frac{p^r_{n-1,t}}{\Gamma+1}$, and set $\lambda_i=n$; i++;
   
    \ELSE

\STATE Set $p^r_{n,t}=p^r_{n,max}$, $\phi_{n,t}=0$ and update the power difference $\epsilon=\frac{p^r_{n-1,t}}{\Gamma+1}-p^r_{n,max}$;

\WHILE{($\sum \lambda \neq 0$ \& $\epsilon\neq 0$)}

\STATE  $j=\lambda_l$;
	
        \IF{$\phi_{n,t} > \epsilon(1+\frac{1}{\Gamma})$}
 \STATE	$p^r_{j,t}=p^r_{j,t}+\epsilon(1+\frac{1}{\Gamma})$; $\phi_{j,t}=\phi_{j,t}-\epsilon(1+\frac{1}{\Gamma})$; $\epsilon=0$
  \ELSE
 \STATE $p^r_{j,t}=p^r_{j,t}+\phi_{j,t}$; $\epsilon=\epsilon-\frac{\phi_{j,t}\Gamma}{1+\Gamma}$; $\phi_{j,t}=0$; $\lambda_l=0$; $l$++;
 	\ENDIF
 	
 \ENDWHILE
 	\ENDIF
 \ENDIF

       \ENDFOR
    
\STATE  Calculate $p^t_{n,t}$ based on channel inversion, $p^t_{n,t} = \frac{p^r_{n,t}}{h_{n,t}}, \forall n$;
              
		\ENSURE   $ p^t_{n,t}$, $U_t$
	  \end{algorithmic}  
\end{algorithm}

\begin{lemma}
\label{lemma:power_level}
Given the largest received power at the BS is $p^r_{1,t} = p^t_{1,max}h_{1,t}$, the maximum joining ratio that can participate in NOMA-enabled FVN is upper bounded by 
\begin{align}\label{powerlevel_max}
    \bar{J} \leq \frac{\lfloor \log{(\frac{p^r_{1,t}}{\sigma_t^2\Gamma})} -  \log{(1+\Gamma)}+1\rfloor}{N},
\end{align}
in which $\lfloor x \rfloor$ denotes the largest integer smaller than or equal to $x$.
\end{lemma}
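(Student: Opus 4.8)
The plan is to treat the maximum joining ratio as $\bar J = M_t/N$ and to upper bound the number of simultaneously decodable signals $M_t$ by exploiting the geometric power spacing that the SINR constraint forces. The key observation is that decodability is not merely \emph{sufficient} under the ratio conditions of Lemma~\ref{lemma:decode_constraint}; the requirement $\mathbf{SINR}_{n,t}\ge \Gamma$ is itself a \emph{necessary} lower bound on each received power, and unrolling these necessary bounds pins down how far apart the strongest and weakest decodable signals must lie.

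Concretely, I would first introduce the aggregate interference-plus-noise term $\mathcal{I}_{n,t}=\sum_{i=n+1}^{M_t}p^r_{i,t}+\sigma_t^2$, so that $\mathbf{SINR}_{n,t}=p^r_{n,t}/\mathcal{I}_{n,t}$. The decodability constraint then reads $p^r_{n,t}\ge \Gamma\,\mathcal{I}_{n,t}$ for every $n\le M_t$. Substituting this into the telescoping identity $\mathcal{I}_{n-1,t}=p^r_{n,t}+\mathcal{I}_{n,t}$ yields the one-line recursion $\mathcal{I}_{n-1,t}\ge(1+\Gamma)\,\mathcal{I}_{n,t}$. This is the crux: it shows that the interference seen by successively stronger signals must grow geometrically with ratio $(1+\Gamma)$.

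Next I would anchor the recursion at both ends. At the bottom, the weakest signal sees only noise, $\mathcal{I}_{M_t,t}=\sigma_t^2$, and its decodability forces $p^r_{M_t,t}\ge\Gamma\sigma_t^2$. Unrolling the geometric recursion from $n=M_t$ up to $n=1$ gives $\mathcal{I}_{1,t}\ge(1+\Gamma)^{M_t-1}\sigma_t^2$, and applying $p^r_{1,t}\ge\Gamma\,\mathcal{I}_{1,t}$ once more produces the dynamic-range inequality $p^r_{1,t}\ge\Gamma\,\sigma_t^2\,(1+\Gamma)^{M_t-1}$. Since $p^r_{1,t}=p^t_{1,max}h_{1,t}$ is fixed by the strongest vehicle's power budget, this inequality caps $M_t$.

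Finally I would take logarithms (base $1+\Gamma$) on both sides and isolate $M_t$, which converts the exponential bound into the stated logarithmic one; because $M_t$ is an integer, the floor appears naturally, and dividing by $N$ turns the count into the joining ratio $\bar J$. The main obstacle I anticipate is not the algebra but the justification that the necessary per-signal bound $p^r_{n,t}\ge\Gamma\,\mathcal{I}_{n,t}$ holds \emph{simultaneously} for all $M_t$ signals under a single SIC ordering — that is, that one may assume the received powers are already sorted in descending order and that the geometric chain is the binding configuration — together with reconciling the exact constant inside the floor with the last-pair boundary condition $p^r_{M_t-1,t}\ge(p^r_{M_t,t}+\sigma_t^2)\Gamma$ of Lemma~\ref{lemma:decode_constraint}.
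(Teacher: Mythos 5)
Your proof is correct, and it takes a genuinely different --- and logically tighter --- route than the paper's. The paper argues from an extremal \emph{design}: it sets the received powers to the minimal spacing of Lemma~\ref{lemma:decode_constraint} with equality ($p^r_{n,t}=p^r_{n+1,t}(1+\Gamma)$, $p^r_{M_t-1,t}=(p^r_{M_t,t}+\sigma_t^2)\Gamma$, $p^r_{M_t,t}\geq\sigma_t^2\Gamma$), computes $p^r_{1,t}\geq\sigma_t^2\Gamma(1+\Gamma)^{\bar{M_t}-1}$ for that particular design, and reads off the bound, implicitly assuming that this equality configuration is the one that maximizes the number of decodable vehicles. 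You instead obtain the same pivotal inequality as a \emph{necessary} condition for every feasible configuration: the recursion $\mathcal{I}_{n-1,t}=p^r_{n,t}+\mathcal{I}_{n,t}\geq(1+\Gamma)\mathcal{I}_{n,t}$, anchored at $\mathcal{I}_{M_t,t}=\sigma_t^2$, forces $p^r_{1,t}\geq\Gamma\sigma_t^2(1+\Gamma)^{M_t-1}$ no matter how the powers are chosen. That is a true converse, so it closes the optimality gap that the paper leaves implicit; this added rigor is what your approach buys, at no extra algebraic cost. Your two anticipated obstacles dissolve on inspection: the simultaneous per-signal constraint $p^r_{n,t}\geq\Gamma\mathcal{I}_{n,t}$ for all $n\leq M_t$ is exactly the model's definition of successful SIC decoding (the descending order is just the decoding order, assumed without loss of generality after relabeling), and the last-pair condition of Lemma~\ref{lemma:decode_constraint} is precisely the instance $n=M_t-1$ of that constraint, since $\mathcal{I}_{M_t-1,t}=p^r_{M_t,t}+\sigma_t^2$, so no constant needs reconciling. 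One caveat, inherited from the paper rather than introduced by you: the exponential inequality yields $M_t\leq \log\bigl(p^r_{1,t}/(\sigma_t^2\Gamma)\bigr)/\log(1+\Gamma)+1$, a \emph{quotient} of logarithms (equivalently $\log_{1+\Gamma}(\cdot)+1$, as in your base-$(1+\Gamma)$ step), whereas \eqref{powerlevel_max} and the paper's own \eqref{eq:upperbound_numberofdevices} write the difference $\log(\cdot)-\log(1+\Gamma)+1$; neither your derivation nor the paper's supports that literal form, so what your proof (correctly) establishes is the quotient version of the stated bound.
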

\begin{proof}

According to $\textbf{Lemma}~\ref{lemma:decode_constraint}$, we can observe that whether the $n$-th signal can be decoded depends mainly on the power of the following signals. Thus, if we follow the design that $p^r_{n,t} = p^r_{n+1,t}(1+\Gamma)$, and $p^r_{M_t-1,t}= (p^r_{M_t,t}+\sigma_t^2)\Gamma$, to ensure $\mathbf{SINR}_{M_t,t} = p^r_{M_t,t}/\sigma_t^2 \geq \Gamma$, we have that $p^r_{1,t} = p^r_{M_t,t}(1+\Gamma)^{\bar{M_t}-1}\geq\sigma_t^2 \Gamma(1+\Gamma)^{\bar{M_t}-1}$,
 Thus, the upper bound of the maximum number of vehicles that can be successfully transmitted simultaneously is given by: 
\begin{align}\label{eq:upperbound_numberofdevices}
\bar{M_t}&\leq  \log{(\frac{p^r_{1,t}}{\sigma_t^2\Gamma})} -  \log{(1+\Gamma)}+1 .
\end{align}
Thus, the maximum joining ratio can be written as equation \eqref{eq:upperbound_numberofdevices}.
\end{proof}

\vspace{-6pt}
\begin{lemma}
   Consider a scenario with $M_t$ vehicles, where the received signal strengths are defined as follows $p^r_{1,t}$, $p^r_{2,t} \leq \frac{p^r_{1,t}}{1+\Gamma}$, ..., $p^r_{M_t-1,t} \leq (\frac{p^r_{{M_t-2},t}}{1+\Gamma})$, $p^r_{M_t,t} \leq (\frac{p^r_{{M_t-1},t}}{\Gamma}-\sigma_t^2)$.  If the signal from vehicle $n$ is reduced by $\epsilon$, then all subsequent vehicles must also decrease their signal strengths according to Lemma~\ref{lemma:decode_constraint} to ensure that the signal of vehicle $n$ can be decoded. A subset of vehicle $X, \forall x<n,$ can increase their received signal strengths, if applicable, by up to $\epsilon (1+\frac{1}{\Gamma})$ such that signals of all vehicles can still be decoded. 
\begin{proof}     
According to~\eqref{eq:d1}, the tolerable interference of vehicles before $n$ is upper bounded by $\mathcal{I}_{j,t}=\sum_{i=j+1}^{m} p^r_{i,t}+\sigma_t^2\leq p^r_{j+1,t}(1+\frac{1}{\Gamma})$ to ensure $\mathbf{SINR}_{j,t} \geq \Gamma, \forall j< n.$ If the signal strength of vehicle $n$ is reduced by $\epsilon$, the interference of vehicles before $n$, i.e., $\mathcal{I}_{j,t}, \forall j< m$, decrease by $\epsilon(1+\frac{1}{\Gamma})$ accordingly. Therefore, it is both desirable and feasible to enhance the signal powers of vehicles preceding vehicle $n$ by a cumulative amount of 
 $\epsilon(1+\frac{1}{\Gamma})$.  This increase should still comply with the SINR requirements of all vehicles involved.

Denote the adjusted signal strengths as ${\hat{p^r}_{1,t},\dots,\hat{p^r}_{n,t},\dots,\hat{p^r}_{M_t,t}}$, where $\hat{p^r}_{j,t}\geq{p^r_{j,t}}, \forall j<n$, $\hat{p^r}_{n,t} = p^r_{n,t}-\epsilon$ and $\hat{p^r}_{i,t}\geq{p^r_{i,t}}, \forall i>n$. For any vehicle $j$ before $n$, we have $\sum_{z =j+1}^{n-1}\hat{p^r}_{z,t}\leq \sum_{z =j+1}^{n-1}q_{z,t} +\epsilon(1+\frac{1}{\Gamma})$, and $\sum_{i=n}^{m_t}\hat{p^r}_{i,t} \leq (p^r_{n,t}-\epsilon)(1+\frac{1}{\Gamma})$. Thus, $\sum_{j=i+1}^{M_t}\hat{p^r}_{j,t}+\sigma_t^2\leq \sum_{j =i+1}^{n}p^r_{j,t}+\sum_{j=n+1}^{M_t}p^r_{j,t}+\sigma_t^2$.
Notice that $\hat{p^r}_{j,t}\geq{p^r_{j,t}}, \forall j<n$, the SINR of vehicle $j$: 
 \begin{align}
     \mathbf{SINR}_{j,t} &= \frac{\hat{p^r}_{j,t}}{\sum_{i=j+1}^{M_t}\hat{p^r}_{i,t}+\sigma_t^2}\nonumber\\
     & = \frac{\hat{p^r}_{j,t}}{\sum_{i =j+1}^{n-1}\hat{p^r}_{i,t}+\sum_{z=m}^{M_t}\hat{q}_{z,t}+\sigma_t^2}\nonumber\\&\geq  \frac{p^r_{j,t}}{\sum_{i=j+1}^{n-1}p^r_{i,t}+\sum_{z=m}^{M_t}p^r_{z,t}+\sigma_t^2}\geq \Gamma.
 \end{align}
The SINR requirements of vehicles equal to and after $n$ are satisfied according to Lemma~\ref{lemma:decode_constraint}.  
 \end{proof}
\end{lemma}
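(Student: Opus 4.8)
The plan is to read each vehicle's SINR requirement as an \emph{interference budget} and to track how shaving $\epsilon$ off $p^r_{n,t}$ reshapes those budgets. I would split the argument to match the two assertions in the statement: first the forced contraction of the tail after $n$, then the quantitative headroom $\epsilon(1+\tfrac{1}{\Gamma})$ that the earlier vehicles may reclaim.

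For the first assertion, I would start from vehicle $n$'s own constraint. Since $\mathbf{SINR}_{n,t}=p^r_{n,t}/\big(\sum_{i=n+1}^{M_t}p^r_{i,t}+\sigma_t^2\big)\geq\Gamma$ is equivalent to $\mathcal{I}_{n,t}=\sum_{i=n+1}^{M_t}p^r_{i,t}+\sigma_t^2\leq p^r_{n,t}/\Gamma$, replacing $p^r_{n,t}$ by $p^r_{n,t}-\epsilon$ tightens this to $(p^r_{n,t}-\epsilon)/\Gamma$. As $\sigma_t^2$ is fixed, the residual tail power must strictly shrink; propagating the ratio conditions of Lemma~\ref{lemma:decode_constraint} (each $p^r_{i+1,t}\leq p^r_{i,t}/(1+\Gamma)$, and the terminal condition on $p^r_{M_t,t}$) downward from $n$ then shows the whole chain beyond $n$ contracts, which is exactly the claim that all subsequent vehicles must lower their strengths to keep signal $n$ decodable.

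The quantitative core is the size of the freed budget. Here I would add $p^r_{n,t}$ to both sides of the interference bound in \eqref{eq:d1} to obtain the \emph{footprint identity} $\sum_{i=n}^{M_t}p^r_{i,t}+\sigma_t^2=p^r_{n,t}+\mathcal{I}_{n,t}\leq p^r_{n,t}\big(1+\tfrac{1}{\Gamma}\big)$. Because this power-plus-noise mass is precisely the part of the interference that every earlier vehicle $j<n$ perceives from index $n$ onward, reducing $p^r_{n,t}$ by $\epsilon$ (with the tail contracted to meet $n$'s constraint) lowers that mass by $\epsilon\big(1+\tfrac{1}{\Gamma}\big)$. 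I would then write the perturbed strengths as $\hat{p^r}_{j,t}\geq p^r_{j,t}$ for $j<n$, $\hat{p^r}_{n,t}=p^r_{n,t}-\epsilon$, and the contracted tail $\hat{p^r}_{i,t}$ for $i>n$, and compare $\mathbf{SINR}_{j,t}$ before and after: the numerator does not decrease while the denominator drops by at least the freed amount, so each $\mathbf{SINR}_{j,t}\geq\Gamma$ is preserved, and Lemma~\ref{lemma:decode_constraint} already secures the constraints at and beyond $n$.

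The subtle point, and what I expect to be the main obstacle, is the bookkeeping for vehicles strictly between $j$ and $n$: raising their strengths feeds back into $\mathcal{I}_{j,t}$, so one must argue that the cumulative increase \emph{distributed across} all vehicles preceding $n$ is capped at exactly $\epsilon\big(1+\tfrac{1}{\Gamma}\big)$, rather than granting each earlier vehicle that full amount independently. I would settle this with a telescoping argument over the chain of ratio constraints, showing the admissible per-vehicle increases form a geometric allocation summing to $\epsilon\big(1+\tfrac{1}{\Gamma}\big)$; this is also the exact quantity redistributed in the inner loop (lines~14--20) of Algorithm~\ref{alg:alg1}, so the bound is tight and consistent with the algorithm's power-reallocation step.
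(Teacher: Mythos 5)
Your proposal follows essentially the same route as the paper's own proof: both rest on the footprint bound $\sum_{i=n}^{M_t}p^r_{i,t}+\sigma_t^2\le p^r_{n,t}\left(1+\tfrac{1}{\Gamma}\right)$ coming from \eqref{eq:d1}, conclude that reducing vehicle $n$ by $\epsilon$ (with the tail contracted to meet $n$'s constraint) frees an interference budget of $\epsilon\left(1+\tfrac{1}{\Gamma}\right)$ for the vehicles before $n$, verify each $\mathbf{SINR}_{j,t}$, $j<n$, by the same numerator-nondecreasing/denominator-nonincreasing comparison with the original configuration, and dispatch the vehicles at and after $n$ via Lemma~\ref{lemma:decode_constraint}. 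Your closing point about capping the \emph{cumulative} rather than per-vehicle increase is exactly the bookkeeping the paper encodes in the inequality $\sum_{z=j+1}^{n-1}\hat{p^r}_{z,t}\le \sum_{z=j+1}^{n-1}p^r_{z,t}+\epsilon\left(1+\tfrac{1}{\Gamma}\right)$, so no genuinely different machinery is introduced.
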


To achieve maximum $M_t$, we propose a heuristic algorithm in \textbf{Algorithm~1}.  
We first determine the maximum achievable signal strength for each vehicle, denoted as, $p^{t}_{n,max}$, using their maximum available transmission power. The vehicles are then organized in descending order based on  $p^r_{n,max} = p^t_{n,max}h_{n,t}$. If the first vehicle in this sequence meets the SINR threshold, i.e., $ \mathbf{SINR}_{1,t}\geq \Gamma$, then the maximum power should be applied for transmission, $p^r_{1,t}=p^t_{1,t}h{1,t}$, the vehicle should then be added to the transmission set $U_t$. To guarantee the successful decoding of vehicle 1, the transmission powers of the following vehicles need to be set based on certain requirements $p^r_{n,t}/p_{n+1,t}\geq (1+\Gamma)$ according to $\textbf{Lemma 1}$, subject to the available energy $p^{r}_{n,max}.$  If $p^{r}_{n,max} \geq \frac{p_{n-1,t}}{\Gamma+1}$, vehicle $n$ has sufficient power but should use $p_{n,t}^r = \frac{p_{n-1,t}^r}{\Gamma+1}$ to ensure previous vehicles can be decoded successfully, in which case vehicle $n$ has extra power $\phi_{n,t}$. Otherwise, vehicle $n$ uses its maximum power for transmission, $p^r_{n,t} =p^{r}_{n,max}$, and the power difference is $\epsilon= (\frac{p_{n-1,t}}{\Gamma+1} - p^{r}_{n,max})$. Based on $\textbf{Lemma 3}$, if vehicle $n$'s signal is reduced by $\epsilon$, any vehicles before $n$ can increase their signal strength by up to $\epsilon(1+\frac{1}{\Gamma})$ such that all vehicles can be decoded. Thus, we check all vehicles before $n$ that have extra powers, i.e., $\forall j < n$, and $\phi_{j,t}>0$.

A vehicle $j$ can increase its signal power given that the increased signal strength does not exceed the available extra power $\phi_{j,t}$, and the total increased signal strength does not exceed $\epsilon(1+\frac{1}{\Gamma})$. The round of power increase for vehicles before $k$ completes when $\epsilon(1+\frac{1}{\Gamma})$ is added to vehicles before $n$ and $\epsilon$ is reset to $0$; or all extra power of vehicles are used up, $\phi_{n,t}=0, \forall j<n$. We continue to check all vehicles and add those with satisfactory SINR into $U_t$, and adjust their transmission powers based on the received signal $p^t_{n,t} = \frac{p^r_{n,t}}{h_{n,t}} \forall n.$

\section{Federated Learning} \label{sec:algorithms}
In this section, we develop a NOMA-enabled FL (NFL) algorithm that extends the traditional Federated Averaging (FedAvg) algorithm, as shown in \textbf{Algorithm~2}, by integrating the user selection and power control for NOMA transmissions described in Sec. III.

\textbf{Local model initialize.} We denote the training iteration as $t \in \{0,1,2, ..., T-1\}$. The central server first sends the initialized model to each vehicle to initialize the local models of all FL vehicles in the current iteration using the downlink communication channel, i.e., 
\begin{equation}
    \omega_{n}^{(t,0)} \leftarrow \omega^{(t)} \enskip \forall n \in U_t,
\end{equation}
where $n$ is the index of $n$-th selected vehicle, and $\omega_{n}^{(t,0)}$ represents, $n$-th vehicle's initial model at the $t$-th communication iteration. Additionally, $\omega^{(t)}$ is the global model.

\noindent
\textbf{Local model update.} For each of the selected FL vehicles in $U_t$, they apply stochastic gradient descent (SGD) to minimize their loss function $\textit{f}_{n}: \mathcal{X}\times \mathcal{Y} \to \mathbb{R}_{+}$ in $\tau$ iteration of local training. The $n$-th FL vehicle's local model update can be written as:
\begin{equation}
    \omega_{n}^{(t,\tau+1)} \leftarrow \omega_{n}^{(t,\tau)} - \frac{1}{k_n}\eta\nabla f_{n}(\omega_{n}^{(t, \tau)}, \xi_{n}^{t}),
\end{equation}
where $\xi_{n}^{t}$ is a sample uniformly chosen from the local data $D_n$ at the $t$-th iteration and $\eta$ is the learning of local training.

\noindent
\textbf{Central BS model aggregation} After local training, the selected FL vehicles send their models to the server at BS using the uplink wireless communication channel. The server aggregates those local models for the next communication iteration as follows:
\begin{equation}
    \omega^{(t+1)} \leftarrow \sum_{n \in U_t} \alpha_n\omega_{n}^{(t,\tau+1)},
\end{equation}
where $\alpha_n$ is the weight of the $n$-th vehicle such that $\alpha_n > 0$ and $\sum_{n \in U_t} \alpha_n = 1$.

\begin{algorithm}
	\renewcommand{\algorithmicrequire}{\textbf{Input:}}
	\renewcommand{\algorithmicensure}{\textbf{Output:}}
	\caption{NFL Algorithm}
	\label{alg:alg2}

 \textbf{Input:} Learning rate $\eta$, initialize global model $\omega^{(0)}$, selected FL vehicles set $U_t$, number of communication rounds $T$, local dataset $\xi_{n}^{(t)}$ for $t \in \{0,1,...,T-1 \}$.
\begin{algorithmic} [1]
\FOR{$t \in \{0,1,...,T-1 \}$}
\STATE Obtain the selected FL vehicles set $U_t$;\\
    \FOR{Vehicle $n \in U_t$}
       \STATE Let $ \omega_{n}^{(t,0)} \leftarrow \omega^{(t)}$;
       \FOR{$\tau \in \{0,1,2,...,\tau^{*}-1\}$}
        \STATE   Local model update: \\  
        $\omega_{n}^{(t,\tau+1)} \leftarrow \omega_{n}^{(t,\tau)} - \frac{1}{k_n}\eta\nabla f_{n}(\omega_{n}^{(t, \tau)}, \xi_{n}^{t})$;
        \ENDFOR
    \ENDFOR
    \FOR{BS received all local models from $n \in U_t$}
      \STATE Update global model: $\omega^{(t+1)} \leftarrow \sum_{n \in U_t} \alpha_n\omega_{n}^{(t,\tau^{*})}$;
    \ENDFOR
    
\ENDFOR
\ENSURE Optimal global model: $\omega^{*}$
\end{algorithmic}
\end{algorithm}
\noindent

\section{Performance Evaluation} \label{sec:numerical}
\begin{figure}[t]
    \includegraphics[width=0.37\textwidth]{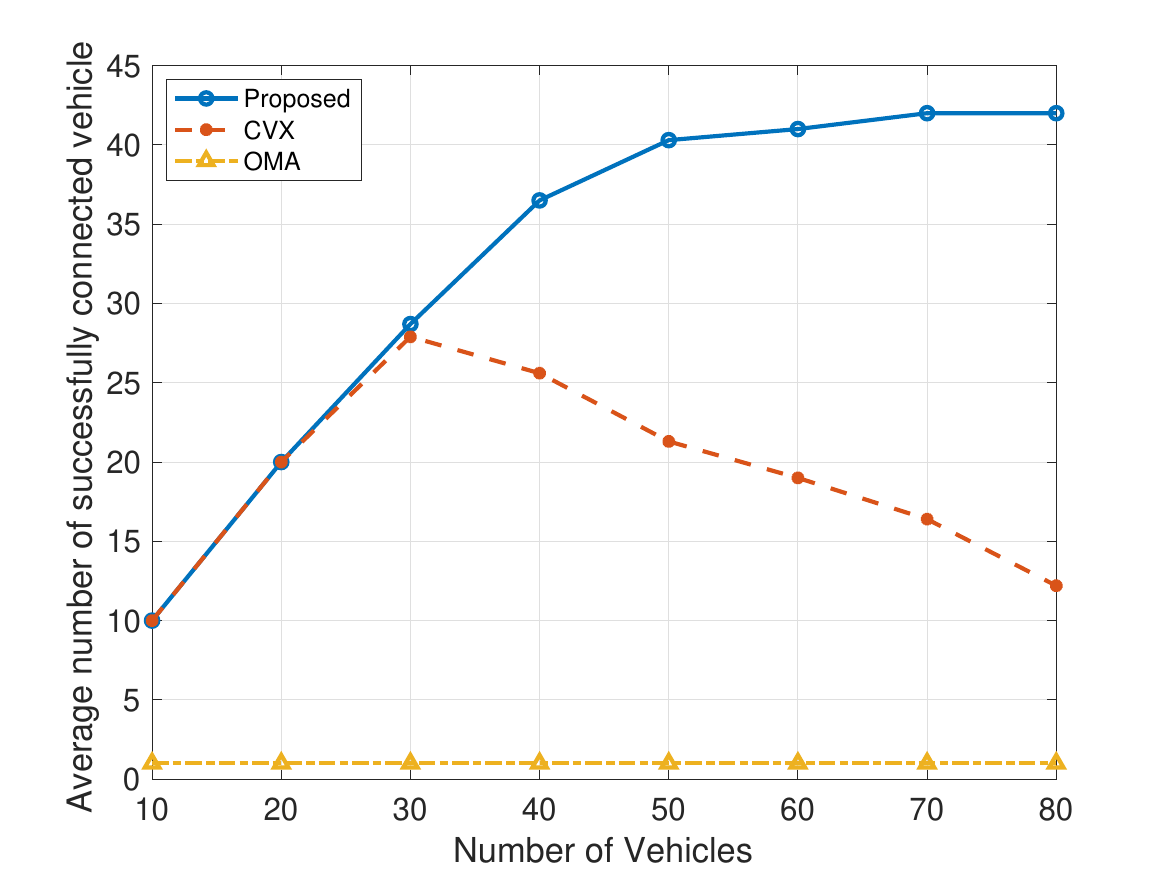}
    \centering
    \caption{Average number of successfully connected vehicles versus the total number of vehicles in the network.}
    \label{fig:noma}
\end{figure}

\begin{figure}
    \includegraphics[width=0.37\textwidth]{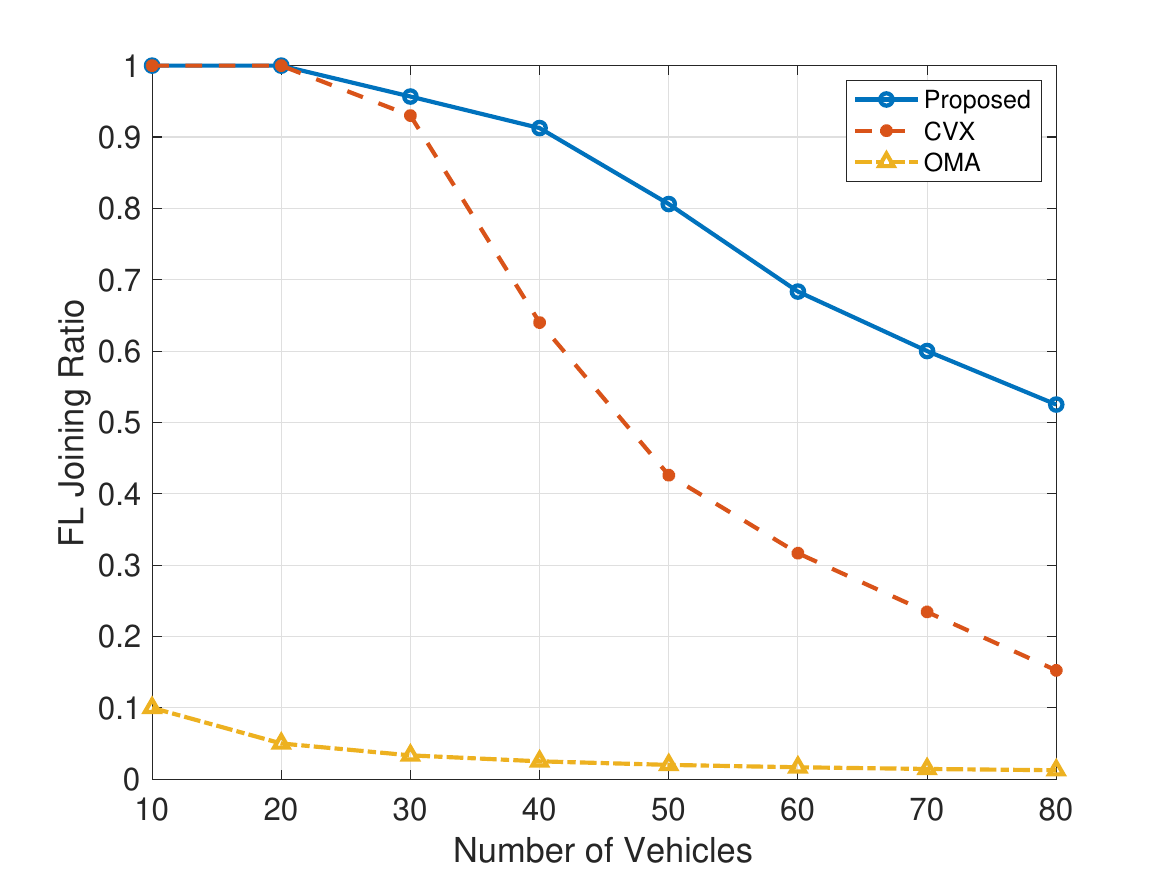}
    \centering
    \caption{Average number of successfully connected vehicles versus the total number of FL joining ratio.}
    \label{fig:jr}
\end{figure}

\begin{figure*}[hthp]
  \centering
  \hfill
 \begin{minipage}{0.24\textwidth} 
    \centering
\subfloat[Test Accuracy of i.i.d. data]{\includegraphics[width=\textwidth]{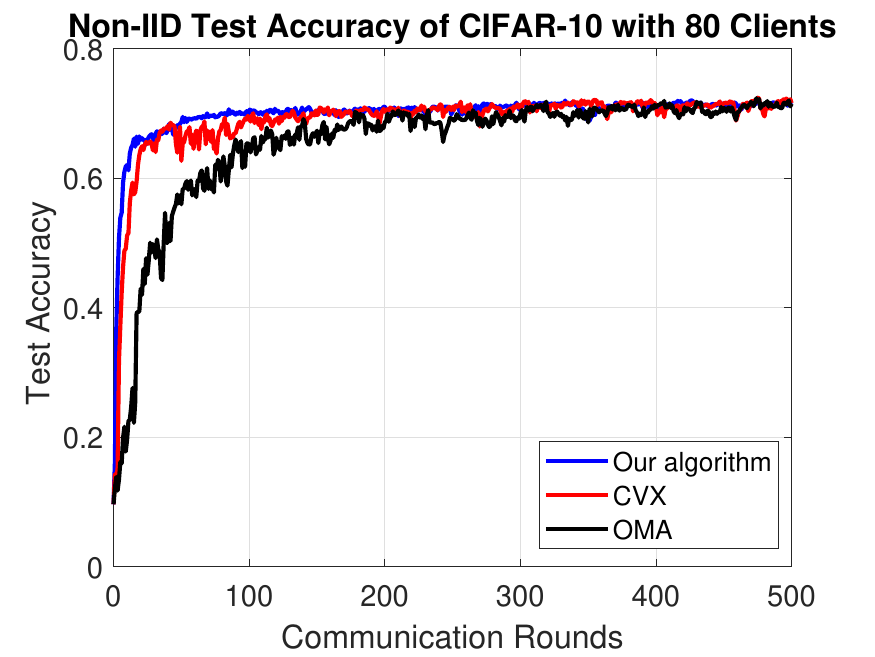}}  
    \label{fig:iidAcc}
  \end{minipage}
  \hfill
 \begin{minipage}{0.24\textwidth}
  	\centering
    \subfloat[Test Accuracy of non-i.i.d. data.]{\includegraphics[width=\linewidth]{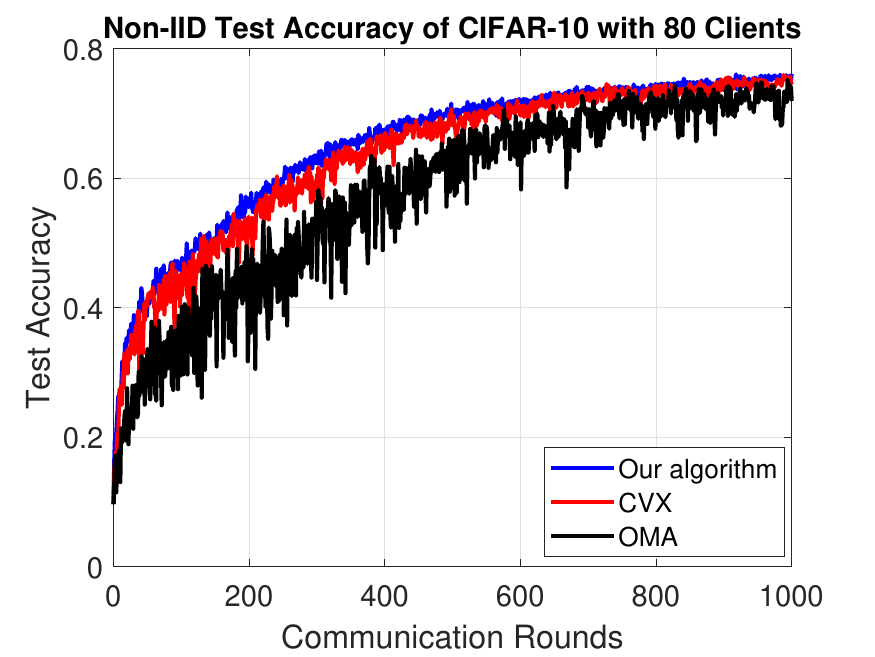}}  
\label{fig:geo_th}
  \end{minipage}
  \hfill
  \begin{minipage}{0.24\textwidth}
    \centering
    \subfloat[Train Loss of i.i.d. data.]{\includegraphics[width=\linewidth]{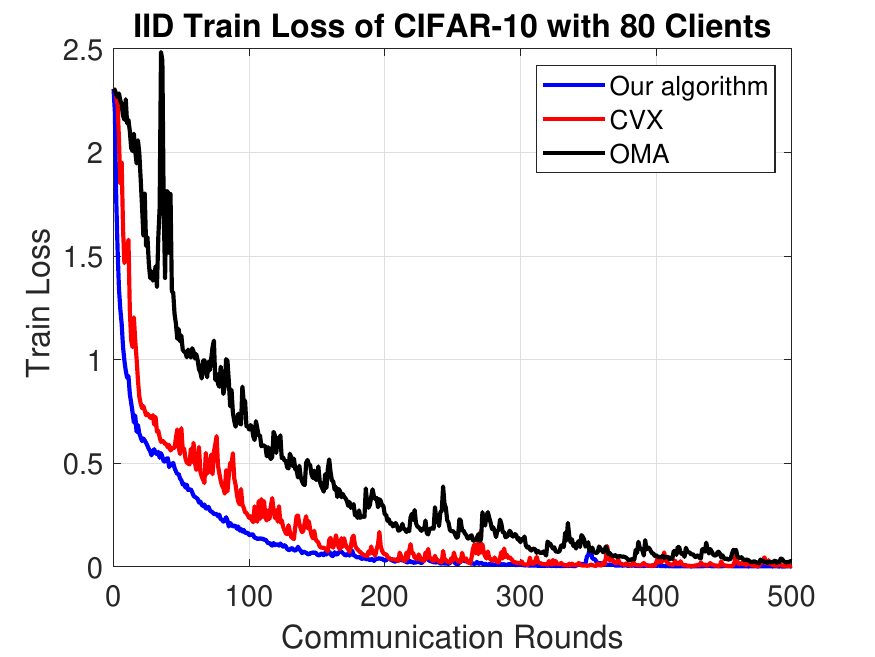}} 
\label{fig:iidLoss}
  \end{minipage}
  \hfill
  \begin{minipage}{0.24\textwidth}
    \centering
    \subfloat[Train Loss of non-i.i.d. data.]{\includegraphics[width=\linewidth]{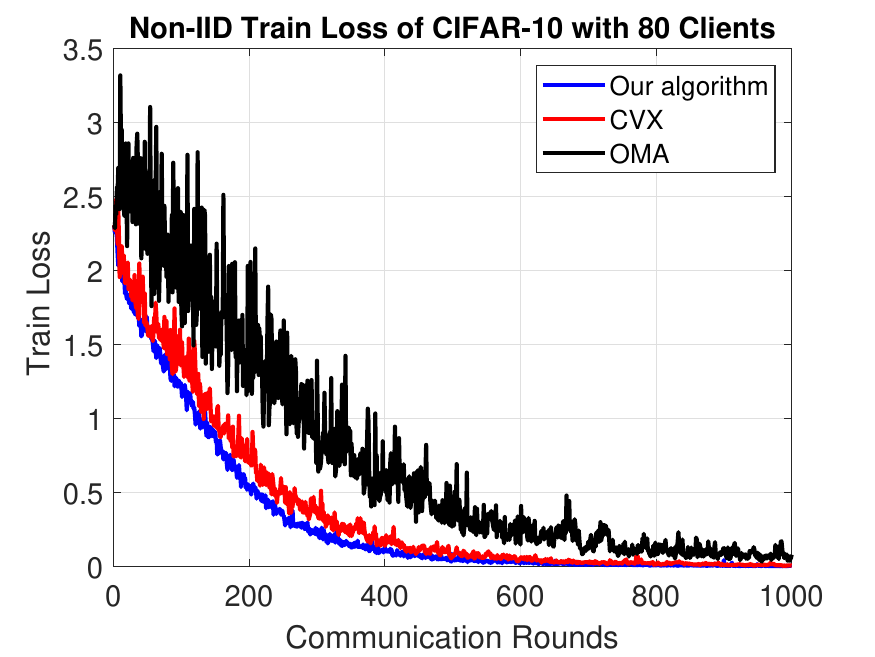}} 
\label{fig:mm_th}
  \end{minipage}
  \caption{Federated Learning results with CIFAR-10 data.}
  \label{fig:NoniidLoss}
  \hfill
\end{figure*}

The vehicle trajectory data utilized in this research was collected through a volunteer-based data collection initiative. This dataset provides a comprehensive list of activities for each individual, recorded at three-second intervals. Specifically, the study extracted and employed the data corresponding to the specific moment with the highest volume of vehicles captured within a 6-mile radius of Elmhurst, IL. To evaluate the impact of varying the number of FL clients, which we refer to as the joining ratio, we utilize the widely recognized CIFAR-10 image classification dataset for training.  We consider a total of 80 vehicles in the network and for each vehicle in the coverage area of the BS, we partition the local data into training and testing sets, i.e., allocating 75\% of the data for training and 25\% for testing.  Additionally, we consider both i.i.d. and non-i.i.d. data sets to show the impacts of the joining ratio on the overall learning performance of the FNV. Moreover, the non-i.i.d. datasets of vehicles follow the Dirichlet distribution with $\alpha_{d} = 0.4$. We apply the FedAvg algorithm for FL with learning rate $\eta = 0.05$. The training process consists of 500 communication rounds for the i.i.d. scenario and 1,000 rounds for the non-i.i.d. scenario to ensure convergence by using a 3-layer Resnet neural network. Without loss of generality, Rayleigh fading is considered for the communication links between vehicles and the base station.  

Our analysis starts by comparing our vehicle selection and power control algorithm with the results obtained from orthogonal multiple access (OMA) methods and a referenced power allocation algorithm which is done by using CVX~\cite{zeng2020sum}. Fig.~\ref{fig:noma} displays the number of successfully connected vehicles versus the total number of vehicles in the network. We can observe that, compared to other solutions, our algorithm can support the largest number of vehicles to upload the local model information. 
Specifically, due to resource limitations, OMA can connect only one vehicle at a time. In contrast, the CVX algorithm, as discussed in~\cite{zeng2020sum}, supports multiple vehicles simultaneously. However, this approach also results in a higher outage probability when the number of vehicles is high. In~\cite{zeng2020sum}, 
it is required that the SINR of all vehicles involved in NOMA transmissions exceed a threshold, $\Gamma$, which can lead to errors in CVX when no solution can be found, and leads to the performance degradation when the number of vehicles are increased in the network. On the other hand, our algorithm not only controls transmit power but also assists in vehicle selection, thereby ensuring successful transmissions consistently. Additionally, Fig. \ref{fig:jr} also shows that our proposed NFL algorithm can achieve a higher joining ratio compared to CVX and OMA.

With the output from the FL vehicle selection with power control, we further apply the NFL algorithm to evaluate the training performance. We present the comparison of distributed learning algorithms over i.i.d.  and non-i.i.d. data distributions using the CIFAR-10 dataset with 80 clients. Our proposed NFL algorithm outperforms the baseline algorithms CVX and OMA, achieving higher test accuracy with less variability across communication rounds. Specifically, in the i.i.d. scenario, our algorithm converges faster to a test accuracy near 0.7, indicating a robust and effective learning process. When dealing with the non-i.i.d. scenario, which better reflects real-world data distributions, our algorithm shows its resilience and adaptability, shown in Fig. \ref{fig:NoniidLoss}, reaching similar accuracy levels as in the i.i.d. scenario, though requiring more communication rounds due to the inherent complexity of non-uniform data. 
Additionally, the train loss for both i.i.d. and non-i.i.d. scenarios show superior efficiency and stability compared with the other two baselines. 
The advantage of our approach is further highlighted in the non-i.i.d. environment, where despite the increased complexity and the naturally extended convergence process, our algorithm markedly outperforms the baselines, achieving a lower and more stable training loss. Our algorithm's enhanced stability and performance, especially evident in the non-i.i.d. scenario, underscore the potential for improved FL performance.






\section{Conclusion} \label{sec:conclusion}
In this paper, we have proposed a NOMA-enabled FVN frame. A joint vehicle selection and power allocation algorithm has been developed to improve the joining ratio of the FVN. Extensive simulations have shown that our algorithm not only significantly increases the joining ratio but also substantially enhances the overall performance of the FVN. In our future work, we plan to explore adaptive federated learning for services beyond CIFAR-10 such as real-time traffic monitoring and management. 

\bibliographystyle{IEEEtran}
\bibliography{IEEEfull,reference}
\end{document}